\def\Ob{\mathcal{O}}
\def\Re{\mathbf{R}}
\def\al{\alpha}
\def\phi{\varphi}
\def\os{\emptyset}
\newcommand{\df}[1]{\textbf{#1}}
\newtheorem{proposition}{Proposition}
\newtheorem{lemma}{Lemma}
\newtheorem{definition}{Definition}
\newtheorem{remark}{Remark}
\newtheorem{example}{Example}
\newtheorem{claim}{Claim}
\def\cite{\citet}
\title{Swap Bounded Envy}
\author{Federico Echenique\thanks{Department of Economics, UC Berkeley, {fede@econ.berkeley.edu}}, Sumit Goel\thanks{Division of Social Sciences, New York University, Abu Dhabi, {sumitgoel58@gmail.com}}, and SangMok Lee\thanks{Department of Economics, Washington University in St. Louis, {sangmoklee@wustl.edu}}}
\date{\today}
\begin{document}
\maketitle

\begin{abstract}
We study fairness in the allocation of discrete goods. Exactly fair (envy-free) allocations are impossible, so we discuss notions of approximate fairness. In particular, we focus on allocations in which the swap of two items serves to eliminate any envy, either for the allocated bundles or with respect to a reference bundle. We propose an algorithm that, under some restrictions on agents' preferences, achieves an allocation with ``swap bounded envy.''

\end{abstract}

\begin{quote}
\textbf{Keywords:} Allocation, Fairness, Envy-freeness, Envy-free up to one good.

\setcounter{page}{0}\thispagestyle{empty}
\end{quote}

\clearpage

\section{Introduction}

We discuss the meaning and feasibility of \df{fair allocations} of bundles of indivisible items to economic agents. Consider the following situation. Alice and Bob are students at a four-year college, and each is assigned a dorm room for their freshman, sophomore, junior, and senior years. Let Alice's bundle of rooms be $(o^A_1, o^A_2, o^A_3, o^A_4)$ and Bob's be $(o^B_1, o^B_2, o^B_3, o^B_4)$. The assignment is made by a central university office. When can we say that this assignment is fair?

An ideal situation occurs when Alice and Bob's preferences are distinct enough that they are each happy with the bundle of items they have received. In this case, Alice does not envy Bob, and Bob does not envy Alice. Outside of such ideal situations, the assignment of indivisible items must entail some unfairness. For example, if Alice and Bob want the exact same rooms, any assignment will leave one of them envying the other. We, therefore, focus on situations where the amount of envy can be limited or bounded. In particular, we discuss three notions of approximate fairness, one of which is new to our paper.

To fix ideas, imagine that Alice envies Bob. She regards Bob's bundle as superior to her own. If Alice complains, the central office might respond with an argument, not an action: ``Yes, you are correct, but our mechanism ensures that your envy is due to only \emph{one} of Bob's assigned rooms. If that single room were removed from Bob's bundle, your envy would be eliminated.'' This response is purely an assessment of the degree of Alice's envy. The office is not proposing to actually change Bob's assignment. The underlying mechanism bounds the envy, in the sense that the elimination of only one item from the envied bundle would remove Alice's envy. This property is well-known in the literature \citep{lipton2004approximately,budish2011combinatorial} as \df{envy-free up to one good (EF1)}.

For dorm assignments, EF1 has a clear problem. Imagine that eliminating Alice's envy requires removing Bob's senior year dorm assignment. Where is Bob supposed to live? This hypothetical removal implies an outside option---such as off-campus housing---that Bob can use. The notion of EF1, therefore, rests on the agents' valuation of such an outside option. If the college has terrible off-campus options, EF1 becomes very easy to satisfy; removing any of Bob's dorm rooms would make his resulting bundle so unattractive that Alice's envy is immediately eliminated. This highlights that EF1 can sometimes be too weak a fairness requirement.

The possible weakness of EF1 motivates looking for a different bound on envy. Suppose the campus office can tell Alice: ``True, you envy Bob. But this envy is limited because there is one room in your bundle that can be swapped with one room in his, and this single swap would remove your envy.'' Again, this is not a practical proposal to swap rooms but an argument to demonstrate that the envy tolerated by the system is bounded. The envy can be traced to a single pair of swappable items. The resulting property is called \textbf{swap envy-free (swapEF)} and has been studied by \cite{caragiannis2024repeatedly}, \cite{bogomolnaia2025teams} and \cite{mancho2025fairness}. 

Swap envy-freeness has one undesirable property. Imagine that Alice envies Bob and that a swap suffices to remove the envy. Now there is a reassignment of dorm rooms that leaves Alice better off without changing Bob's assignment. Despite the reassignment, Alice may still envy Bob. Moreover, it is possible that this envy cannot be removed by a swap. So the property of being swapEF is not preserved by a reallocation that improves the bundle of Alice, the envying partner.\footnote{See Remark~\ref{rmk:TTC_swapEF} for a specific example.} 

Thus motivated by the disadvantages of EF1 and swapEF, 
we introduce a notion of bounded envy that is robust to a welfare improvement for the envying agent. The office's response to Alice could involve a hypothetical reference bundle, $(o_1,o_2,o_3,o_4)$, and a single year, $t$. Say $t=4$, the senior year, just to be concrete. This reference bundle serves to bound Alice's envy if two conditions hold:
\begin{enumerate}
    \item Alice considers her own bundle to be at least as good as the reference bundle.
    \item Alice would prefer to swap the year-$t$ room from the reference bundle for the year-$t$ room from Bob's bundle. That is, she would rank the bundle $(o_1,o_2,o_3,o^B_4)$ as 
    better than $(o^B_1,o^B_2,o^B_3,o_4)$.
\end{enumerate}

This argument demonstrates that Alice's envy for Bob's entire bundle is largely concentrated in a single item ($o^B_4$). While she might envy Bob's complete package, his whole four-year trajectory in campus housing, the value is not distributed evenly. One specific room is the primary source of envy. This property is new to our paper. We call it \df{swap bounded envy}, or swapBE.

We focus on general problems where agents receive bundles of the same cardinality. We show that the general class of \df{draft mechanisms}, where agents choose items in sequence, produces assignments that are both EF1 and swapEF. 

Next, we consider the general version of the dorm assignment problem, in which agents must consume a bundle of a given cardinality, and an outside option exists for bundles that fall short. Here, a draft mechanism may not achieve EF1 or swapEF. We propose a mechanism based on Top Trading Cycles and Serial Dictatorship, and show that under a preference restriction (separability), it produces an assignment that is EF1 and swapBE.

Unfortunately, the above mechanisms may result in inefficient allocations. We then turn to a Nash welfare maximization objective and show that, under certain strong restrictions on agents' preferences, maximizing Nash welfare results in an assignment that is EF1, swapEF, and Pareto optimal among equal-cardinality assignments.

\paragraph{Related Literature.}

\cite{caragiannis2024repeatedly} study repeated assignment of a set of indivisible items to a set of agents; the dorm assignment problem is a special case of their model. Their main results concern social welfare maximization and the existence of EF1 allocations, but they introduce the notion of swapEF that we have discussed in the introduction.

\cite*{bogomolnaia2025teams} and \cite*{mancho2025fairness} study allocation problems in which the cardinality of bundles is fixed, and consider swapEF (termed envy-free up to one flip in their papers). Under additive utilities, they show that swapEF  allocations always exist. \cite{bogomolnaia2025teams} also show compatibility with efficiency in some special cases, such as two agents, identical utilities, and binary utilities, while \cite{mancho2025fairness} prove that a Nash welfare–maximizing allocation is approximately swapEF.

We are motivated by problems like the one we highlighted in the introduction, in which EF1 is desirable but may be too weak to be a convincing bound in envy. 
We discuss algorithms that achieve swapEF or the new notion that we introduce, swap-bounded envy, along with EF1.

\cite*{unreasonable_fairness} shows that Nash welfare maximization finds an EF1 and Pareto-efficient allocation. Their results do not hold when there are restrictions on the bundles that may be received, such as the four-year assignment of dorm rooms to students (or, more generally, what we call multi-dimensional assignment). \cite*{biswas2018fair} consider allocation problems under a cardinality constraint, such as the dorm room assignment, and prove the existence of EF1 allocations for additively separable preferences. \cite{unreasonable_fairness} introduce the notion of being envy-free up to any good, meaning that Alice's envy of Bob is eliminated by the removal of any of Bob's items in his bundle. \cite{plaut2020almost} provide conditions under which allocations that are envy-free up to any good exist.

\section{Setup}

A general allocation problem consists of a tuple $(\mathcal{N}, \mathcal{O}, (\succeq_i)_{i \in \mathcal{N}})$, where
\begin{itemize}
    \item $\mathcal{N} = \{1, 2, \dots, n\}$ is a set of agents,
    \item $\mathcal{O} = \{o_1, \dots, o_m\}$ is a set of indivisible objects, and
    \item $\succeq_i$ is each agent $i$'s preference over $2^{\mathcal{O}}$.
\end{itemize}
An allocation is a function $\mu: \mathcal{N} \to 2^{\mathcal{O}}$ such that $\mu(i) \cap \mu(j) = \emptyset$ for $i \neq j$, 
and $\cup_{i} \mu(i) = \mathcal{O}$.\footnote{The last requirement rules out a trivially envy-free allocation where no agent receives any object.}

\paragraph{Properties of preferences.}

A preference $\succeq$ over $2^O$ is \df{monotonic} if $S' \succeq_i S$ for any $S \subseteq S' \subseteq \mathcal{O}$. Throughout the paper, we focus on allocation problems $(\mathcal{N}, \mathcal{O}, (\succeq_i)_{i \in \mathcal{N}})$ in which each preference $\succeq_i$ is monotonic. So we restrict attention to items that are ``goods,'' not ``bads.''

A second property that we shall need is that preferences over bundles have to be responsive to preferences over individual items. We say that a preference  $\succeq$ over $2^{\mathcal{O}}$ is \textbf{responsive} if, for any $C, D \subseteq \mathcal{O}$ with $|C|=|D|$,
$$\exists \text{ bijection } f: C \to D \text{ s.t. } \{o\} \succeq \{f(o)\} \text{ for all } o\in C \implies C \succeq D.$$
If an agent's preferences are responsive, then a ranking $R$ over individual objects is defined by 
$$o \mathrel{R} o' \iff \{o\} \succeq \{o'\}.$$
For any two bundles of the same size, the existence of an object-by-object comparison in which every object in one bundle is ranked higher than the corresponding object in the other implies that the bundle with the better objects is preferred. Responsive preferences allow us to compare some bundles of different sizes as well. If $|C| > |D|$, we search for a subset $C' \subsetneq C$ with $|C'| = |D|$ such that $C' \succeq D$. Then, $C \succeq C' \succeq D$ by the preference monotonicity. Responsive preferences are used in our Proposition~\ref{prop:draft_find_ef1x}.

A strong version of the responsiveness property requires that preferences over bundles be obtained as the sum of the utilities over individual items. This turns out to be equivalent to the strong separability property. 
We say that a preference $\succeq$ is \df{additively separable} if there exists a \df{utility function} $u:\Ob\to\Re$ such that $S\succeq S'$ if and only if \[
\sum_{o\in S} u(o)\geq \sum_{o\in S'} u(o).
\]

\paragraph{Fairness.}

An agent \df{envies} another agent in an allocation if they prefer the other agent's bundle to their own. In consequence, we say that an allocation $\mu$ is \textbf{envy-free} if $\forall i , j \in \mathcal{N}$, $\mu(i) \succeq_i \mu(j)$. Envy-freeness is a popular notion of fairness in economics, but it is too restrictive for discrete allocation problems. It is often impossible for an allocation to be envy-free when multiple agents prefer the same objects. Thus, the discrete allocation literature has studied relaxations of envy-freeness.

To relax envy-freeness is to allow for the presence of envy, but to somehow limit, or bound, the envy that agents feel over the assignment of other agents. In a model in which preferences admit an objective measure of utility, it is possible to quantitatively measure and compare the envy that agents feel. We consider problems where such a cardinal measure of envy is unavailable. The bound must therefore be obtained from an objective manipulation of the bundles of objects in question.

The most widely studied relaxation is \textbf{envy-freeness up to one good (EF1)}. An allocation $\mu$ is EF1 if $\forall i, j \in \mathcal{N}$, 
$$
\mu(j) \succ_i \mu(i) \implies \exists o \in \mu(j) 
\text{ such that } \mu(i) \succeq_i \mu(j) \backslash\{o\}.
$$
Observe how EF1 imposes an upper bound on envy: agent $i$ may envy agent $j$, but the envy is limited in that removing just one object from $j$'s bundle is enough to eliminate it.

For some applications, however, dropping an object from an agent's bundle may be unthinkable, so the EF1 can potentially be non-binding. For example, consider an allocation of similar objects where fairness requires distributing an equal number of objects to each agent. If dropping an object from an agent's bundle represents replacing it with an outside option which is highly undesirable, then it is difficult to argue that the envy is moderate simply because its magnitude is bounded by the value of a single object.

Accordingly, an alternative relaxation of envy-freeness that is independent of the outside option value has been proposed by \cite{bogomolnaia2025teams} and \cite{caragiannis2024repeatedly}. The idea is that an agent's envy toward another agent is bounded by a single exchange of their objects:
\begin{definition}
\label{def:swapEF}
An allocation $\mu$ is \textbf{swap envy-free (swapEF)} if $\forall i, j \in \mathcal{N}$, 
$$
\mu(j) \succ_i \mu(i) \implies \exists o \in \mu(j), o'\in \mu(i)
\text{ such that } \mu(i) \cup \{o\} \backslash\{o'\} \succeq_i \mu(j) \cup \{o'\} \backslash\{o\}.
$$
\end{definition}

Perhaps swapEF is a more suitable requirement than EF1 when the objects are similar to one another relative to an outside option. Consider an allocation $\mu$ of houses, cars, and boats to two agents $i$ and $j$. Suppose agent $i$ gets a small house, a small car, and a small boat ($\mu(i) = (h, c, b)$), whereas agent $j$ gets a large house, a large car, and a large boat ($\mu(j) = (H, C, B)$). The allocation appears unfair but it can still satisfy EF1. If both agents have the same additively separable utilities, 10 for a large object and 9 for a small one, then agent $i$’s utility is $9 + 9 + 9 = 27$, while agent $j$’s utility is $10 + 10 + 10 = 30$. The allocation is EF1 because agent $i$ no longer envies agent $j$ if any object is removed from $\mu(j)$: $9 + 9 + 9 > 10 + 10$. Nonetheless, the allocation is not swapEF, because exchanging houses, cars, or boats would not eliminate the envy: $9 + 9 + 10 < 10 + 10 + 9$. 

Logically, EF1 and swapEF are unrelated concepts. One does not imply the other. If the agents' additively separable utilities in the house-car-boat example are 4 for a large house, 2 for a large car or boat, and 1 for a small object, then the allocation $\mu$ is swapEF but not EF1. The allocation is swapEF because agent $i$'s envy is eliminated by an exchange of houses: $u(H, c, b) = 6 > u(h, C, B) = 5$. It is not EF1, because agent $i$ still prefers agent $j$'s bundle even after dropping any single object: $u(h, c, b) = 3 < 4 \leq \min\{u(H,C), u(H,B), u(C,B)\}$. Hence, we may require a fair allocation to satisfy both EF1 and swapEF.

The main conceptual innovation in our paper is a notion of bounded envy that we term swap bounded envy. We defer its definition to Section~\ref{sec:multi-dimension}.

\section{Fair allocation mechanisms}
\label{sec:draftmech}

\subsection{Equal-number environment}
A swapEF allocation may not exist if commonly preferred objects cannot be evenly distributed. For example, if there are two agents but only one object, one agent must envy the other in any allocation, and there is no feasible exchange, so a swapEF allocation does not exist. This non-existence problem persists when there are three objects. Suppose the agents have identical additively separable utilities: $u_o=3$, $u_{o'}=4$, $u_{o''}=5$. Then, the utility from the bundle of two objects is $3 + 4 = 7$ or higher, so the agent who gets one object envies the other agent who gets two, and the envy persists after any exchange of objects.

Accordingly, we focus on \textbf{an equal-number allocation environment}. The number of objects $m$ and the number of agents $n$ satisfy $m=n K$ for some $K \in \mathbb{N}$ so that the objects can be evenly distributed. This setting includes multi-dimensional allocations such as house-car-boat allocations or multi-year campus housing allocations (the application we emphasized in the introduction).\footnote{While expositionally more complicated, the same results can be obtained when agents have $K$-demand preferences, meaning they prefer to receive up to $K$ objects.}

A \textbf{draft mechanism} takes agents' preferences and allocates objects over $K$ rounds. In each round, agents are ordered by a priority, and following this ordering, they sequentially select their most preferred object from the remaining ones. The priority ordering can be arbitrary, period-specific, and may even depend on the preference profile. Examples are:

\begin{itemize}
    \item Serial-dictatorship (SD): Fix an ordering of the agents. In each round, the agents pick an object given order.
    \item Alternating SD: Fix an ordering $>$ of the agents. In the first round, agents pick an object in the order given by $>$. In the next round, they pick objects in the opposite order to $>$. Then they alternate orders in each round.
    \item First-in-first-out (FIFO): Fix an ordering $>_1$ of the agents. Given an ordering $>_k$ used in round $k$, let $>_{k+1}$ be obtained by moving the first agent in $>_k$ to the last position, keeping the rest the same.
\end{itemize}
There are random versions of each of the preceding mechanisms obtained by choosing the initial ordering at random.

\begin{proposition}
\label{prop:draft_find_ef1x}
Suppose $m = nK$ for some $K \in \mathbb{N}$, and that every agent $i$ has a responsive preference $\succeq_i$ over $2^{\mathcal{O}}$. Then, any draft mechanism produces an allocation that satisfies both EF1 and swapEF.
\end{proposition}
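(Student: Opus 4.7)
The plan is to fix distinct agents $i,j$ and analyze the sequence of their picks across the $K$ rounds, lifting single-item comparisons to bundle comparisons via responsiveness. For each round $k \in \{1,\ldots,K\}$, let $P_i^k$ and $P_j^k$ denote the objects selected by $i$ and $j$ in round $k$, so that $\mu(i) = \{P_i^1,\ldots,P_i^K\}$ and $\mu(j) = \{P_j^1,\ldots,P_j^K\}$. Responsiveness ensures that ``most preferred remaining object'' is well defined from the single-item ranking $R_i$ induced by $\succeq_i$.

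The crucial single-item observation is that for every $k \in \{1,\ldots,K-1\}$, agent $i$'s round-$k$ pick occurs strictly before agent $j$'s round-$(k+1)$ pick, because round $k$ finishes before round $k+1$ begins. Hence $P_j^{k+1}$ is in the remaining pool at the moment $i$ chooses $P_i^k$, and since $i$ selects an $R_i$-favorite remaining object, $\{P_i^k\} \succeq_i \{P_j^{k+1}\}$. This yields a single-item dominating bijection $P_i^k \mapsto P_j^{k+1}$ between the equal-size sets $\{P_i^1,\ldots,P_i^{K-1}\}$ and $\mu(j) \setminus \{P_j^1\}$.

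For EF1, responsiveness applied to this bijection gives $\{P_i^1,\ldots,P_i^{K-1}\} \succeq_i \mu(j) \setminus \{P_j^1\}$; since $\{P_i^1,\ldots,P_i^{K-1}\} \subseteq \mu(i)$, monotonicity and transitivity yield $\mu(i) \succeq_i \mu(j) \setminus \{P_j^1\}$. For swapEF, assume $\mu(j) \succ_i \mu(i)$ (otherwise there is nothing to show). I will argue that the envy forces $\{P_j^1\} \succ_i \{P_i^K\}$: otherwise, extending the bijection by $P_i^K \mapsto P_j^1$ would, by responsiveness, give $\mu(i) \succeq_i \mu(j)$, a contradiction. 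Given $\{P_j^1\} \succ_i \{P_i^K\}$, consider swapping $P_i^K$ and $P_j^1$. The post-swap bundles $\{P_i^1,\ldots,P_i^{K-1},P_j^1\}$ and $\{P_j^2,\ldots,P_j^K,P_i^K\}$ admit the single-item dominating bijection $P_i^k \mapsto P_j^{k+1}$ for $k<K$ together with $P_j^1 \mapsto P_i^K$, so responsiveness delivers exactly the preference required by Definition~\ref{def:swapEF}.

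The main subtlety is ensuring that the available-in-pool argument is robust to arbitrary, possibly history- and round-dependent priority orderings, and to the interleaved picks of the other $n-2$ agents; this is handled by observing that the argument relies only on the sequential processing of rounds, i.e., that every round-$(k+1)$ pick is made after every round-$k$ pick. A minor technical point is the size mismatch in the EF1 conclusion, which is resolved by combining responsiveness on the equal-size subset $\{P_i^1,\ldots,P_i^{K-1}\}$ with monotonicity of $\succeq_i$.
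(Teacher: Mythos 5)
Your proposal is correct and follows essentially the same argument as the paper: the key single-item comparison $\{P_i^k\}\succeq_i\{P_j^{k+1}\}$ (since round $k$ precedes round $k+1$), the case split on whether $\{P_j^1\}\succ_i\{P_i^K\}$, and the lifting to bundles via responsiveness plus monotonicity. The only difference is cosmetic: the paper routes the EF1 conclusion through the post-swap bundle $(o^2_j,\dots,o^K_j,o^K_i)$, while you compare the $(K-1)$-element subsets directly before invoking monotonicity.
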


Proposition~\ref{prop:draft_find_ef1x} is stated in \cite{bogomolnaia2025teams} for the case of additively separable preferences (see their Proposition 6). Their proof, however, only requires preferences to be responsive. For completeness, we present a self-contained proof below.

\begin{proof}
Let $\mu$ be an allocation obtained by a draft mechanism. For any agents $i$ and $j$, we denote $\mu(i) = (o^1_i, o^2_i, \dots, o^K_i)$, where agent $i$ obtained object $o^k_i$ in round $k$, and similarly write $\mu(j) = (o^1_j, o^2_j, \dots, o^K_j)$ for agent $j$. 

No matter how agents are ordered in each round $k = 1, 2, \dots, K-1$, we have $\{o^k_i\} \succeq_i \{o^{k+1}_j\}$, because $o^k_i$ was chosen when $o^{k+1}_j$ was available.

Now, there are two possibilities:
\begin{enumerate}
    \item $\{o^K_i\} \succeq_i \{o^1_j\}$: In this case, it follows from responsive preferences that $\mu(i) \succeq_i \mu(j)$.

    \item $\{o^1_j\} \succ_i \{o^K_i\}$: In this case, agent $i$ may envy $j$. However, an exchange of $o^1_j$ and $o^K_i$ results in
    \begin{align*}
        \mu(i) \cup \{o^1_j\} \setminus \{o^K_i\} &= (o^1_i, o^2_i, \dots, o^{K-1}_i, o^1_j) \\
        &\succeq_i (o^2_j, o^3_j, \dots, o^K_j, o^K_i) = \mu(j) \cup \{o^K_i\} \setminus \{o^1_j\},
    \end{align*}
    So, the allocation is swapEF.

    Moreover,
    \begin{align*}
        \mu(i) &= (o^1_i, o^2_i, \dots, o^{K-1}_i, o^K_i) \\
        &\succeq_i (o^2_j, o^3_j, \dots, o^K_j, o^K_i) = \mu(j) \cup \{o^K_i\} \setminus \{o^1_j\} \\
        &\succeq_i (o^2_j, o^3_j, \dots, o^K_j) = \mu(j) \setminus \{o^1_j\},
    \end{align*}
    where the last inequality follows from the preference monotonicity with respect to set inclusion. Thus, the allocation is EF1 as well.
\end{enumerate}
\end{proof}

Unfortunately, an allocation produced by a draft mechanism may not be Pareto efficient:
\begin{example}
Suppose there are $n = 2$ agents and $m = 6$ objects. Agents $i$ and $j$ have additively separable utilities: $u_i = (10, 9, 5, 4, 1, 0)$ and $u_j = (20, 4, 3, 2, 1, 0)$. Under the SD draft mechanism with a priority order $i > j$, agent $i$ receives a utility of 16 ($= 10 + 5 + 1$), while agent $j$ receives a utility of 6 ($= 4 + 2 + 0$). However, this allocation is Pareto dominated by another allocation that gives agent $i$ a utility of 18 ($= 9 + 5 + 4$) and agent $j$ a utility of 21 ($= 20 + 1 + 0$).
\end{example}

\subsection{Multi-dimensional environment}
\label{sec:multi-dimension}

A special case of the equal-number allocation environment is the multi-dimensional environment -- for example, the house-car-boat economy or college dorm allocation over multiple years as in the example with Alice and Bob that we discussed in the introduction. We refer to the dimensions as \emph{time periods}, denoted by $t = 1, 2, \dots, T$. 

We assume that there is a set $O_t$ of $n$ objects available in period $t$. The objects could be, for example, dorm rooms. So the same physical object is available in different periods, but we keep them apart as distinct because their consumption takes place at different periods. We assume also the existence of a ``null object,'' $\emptyset_t$, representing an outside option for agents who receive no object in period $t$. The null object is  available in unlimited supply. Let $\bar O_t=O_t\cup \{\emptyset_t\}$.

An agent consumes a tuple $(o_1, \dots, o_T)\in\prod_{t=1}^T\bar O_t$, where each $o_t$ is either an object available in period $t$ or the null object $\emptyset_t$. We may also write a tuple as $(o_t, o_{-t})$. We continue to assume responsive preferences. Thus, there is a strict preference $R_i$ over $\cup_t \bar O_t$ such that $(o_t, o_{-t}) \succeq (o'_t, o_{-t})$ iff $o_t \mathrel{R_i} o'_t$. Hence, if $o_t \mathrel{R_i} o'_t$ for all $t$, then $(o_1, \dots, o_T) \succeq_i (o'_1, \dots, o'_T)$. Abusing notation, we denote $R_i$ by $\succeq_i$. Note that $o_t R_i \emptyset_t$, as we assume each preference to be monotonic throughout the paper.

We also abuse notation in that we may omit consumption in one period. The implicit assumption in those cases is that the agent consumes the outside option. For example, when $T=3$ and we write consumption as $(o_1,o_2)$ we really mean $(o_1,o_2,\emptyset_3)$.

A multi-dimensional allocation is a function $\mu = (\mu_t)$ with $\mu_t:\mathcal{N} \to \bar O_t$ such that $\mu_t(j) \neq \mu_t(i)$ unless $\mu_t(j) = \mu_t(i)=\emptyset_t$ for all $j \neq i$. A multi-dimensional allocation $\mu$ is 
\begin{itemize}
    \item \textbf{envy-free up to one good (EF1)} if $\forall i, j$, 
\[
\mu(j) \succ_i \mu(i) \implies \exists t \text{ such that } \mu(i) \succeq_i (\os_t, \mu_{-t}(j)); \text{ and }
\] 
    \item \textbf{swap envy-free (swapEF)} if $\forall i, j$, 
\[
\mu(j) \succ_i \mu(i) \implies \exists t \text{ such that } (\mu_t(j), \mu_{-t}(i)) \succeq_i (\mu_t(i), \mu_{-t}(j)).
\]
\end{itemize}

\subsubsection{Draft mechanisms are not EF1 or swapEF in a multi-dimensional environment.}
Draft mechanisms can be suitably modified for a multi-dimensional environment by allocating objects in period~$t$ during round~$t$. 
However, unlike in an equal-number allocation environment (Proposition~\ref{prop:draft_find_ef1x}), draft mechanisms modified for a multi-dimensional setup are no longer EF1 or swapEF. For instance, consider the SD or FIFO mechanism, and suppose that there are fewer periods than agents ($T < n$). Then, there exist agents, say $i$ and $j$, such that agent~$j$ picks an object before agent~$i$ in every round $t = 1, \dots, T$. Agent~$i$ may envy agent~$j$, but there may be no way to eliminate the envy by dropping one object from $\mu(j)$ or by exchanging $\mu_t(i)$ and $\mu_t(j)$ in some period $t$.

Similarly, alternating SD can be EF1 and swapEF only in special cases.
\begin{lemma}\label{lem:SD}
\begin{enumerate}
    \item\label{it:lemSD1} If $T=2$, alternating SD is EF1 and swapEF.
    \item\label{it:lemSD2} If $T \leq 4$ and agents have additive separable utilities, then alternating SD is swapEF.
\end{enumerate}
\end{lemma}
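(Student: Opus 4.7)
My plan is to analyze alternating SD pair-by-pair and period-by-period, using the draft invariant that whoever picks first in a period obtains a weakly preferred object. For part~\ref{it:lemSD1}, I would fix two agents $i, j$ and WLOG assume $i$ picks first in period~$1$, so $j$ picks first in period~$2$. Since $i$ chose $o^1_i$ when $o^1_j$ was still available, $o^1_i \mathrel{R_i} o^1_j$. If $i$ envies $j$, then $o^2_j \mathrel{R_i} o^2_i$ must hold strictly, because otherwise coordinate-wise $R_i$-dominance together with responsiveness would give $\mu(i) \succeq_i \mu(j)$, contradicting envy. swapEF then follows by swapping the period-$2$ objects: responsiveness applied to $o^1_i \mathrel{R_i} o^1_j$ and $o^2_j \mathrel{R_i} o^2_i$ yields $(o^1_i, o^2_j) \succeq_i (o^1_j, o^2_i)$. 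For EF1, I would drop $o^2_j$: responsiveness applied to $o^1_i \mathrel{R_i} o^1_j$ and $o^2_i \mathrel{R_i} \emptyset_2$ (from monotonicity) gives $\mu(i) \succeq_i (o^1_j, \emptyset_2)$. The opposite priority case is symmetric after swapping the roles of the two periods.

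For part~\ref{it:lemSD2}, I would exploit additive separability to work with cardinal utilities $u_i$ and define $d_t := u_i(o^t_j) - u_i(o^t_i)$ for each period~$t$. A direct rearrangement shows that the swapEF condition in period~$t$ reduces to $2 d_t \geq \sum_s d_s$, while $i$ envying $j$ is exactly $\sum_s d_s > 0$. In every period where $i$ picks first, $d_t \leq 0$. The key pigeonhole observation is that in alternating SD with $T \leq 4$, for any ordered pair $(i, j)$, agent $j$ picks first in at most $\lceil T/2 \rceil \leq 2$ periods; hence $P := \{t : d_t > 0\}$ is contained in those periods and satisfies $|P| \leq 2$. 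If $P = \emptyset$ there is no envy; otherwise
\[
\max_t d_t \;\geq\; \frac{1}{|P|}\sum_{t \in P} d_t \;\geq\; \frac{1}{2}\sum_{t \in P} d_t \;\geq\; \frac{1}{2}\sum_{t} d_t,
\]
where the last inequality uses $d_t \leq 0$ outside $P$. Hence swapping in the period achieving the maximum establishes swapEF.

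The main subtlety is that $T \leq 4$ is tight precisely because the pigeonhole step requires $\lceil T/2 \rceil \leq 2$. For $T = 5$, the argument breaks when $j$ has the higher base priority, giving $j$ first pick in three of the five periods, and a counterexample can be constructed by concentrating positive $d$-mass across those three periods so that no single swap suffices. None of the steps involve heavy calculation; the content lies in recognizing the alternation pattern and identifying the right linear quantity $d_t$ to bound.
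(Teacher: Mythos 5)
Your proof is correct and takes essentially the same approach as the paper: part~\ref{it:lemSD1} matches the paper's case analysis exactly, and for part~\ref{it:lemSD2} the paper likewise swaps in the period with the largest utility difference $d_t$, using the fact that at most two periods have $d_t>0$ under alternating SD with $T\leq 4$ (your averaging inequality is just a compact restatement of the paper's WLOG comparison of the two positive differences).
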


\begin{proof}

\textbf{(\underline{Statement~\ref{it:lemSD1}}.)}

Suppose $i$ envies $j$: $\mu(j) \succ_i \mu(i)$. Since the ordering of agents is alternating, in either round $t = 1$ or $t = 2$, agent $i$ picks before agent $j$, so $\mu_t(i) \succeq_i \mu_t(j)$.

First, $\mu_{t'}(i) \succeq_i \emptyset_{t'}$ for $t' \neq t$, because preferences are monotonic in set inclusion. Then, by responsiveness, $(\mu_t(i), \mu_{t'}(i)) \succeq_i (\mu_t(j), \emptyset_{t'})$, i.e., dropping an object from $\mu(j)$ eliminates agent $i$'s envy, so the allocation $\mu$ is EF1.

Second, since $i$ envies $j$ despite $\mu_t(i) \succeq_i \mu_t(j)$, it must be that $\mu_{t'}(j) \succ_i \mu_{t'}(i)$ for $t' \neq t$. Thus, $(\mu_t(i), \mu_{t'}(j)) \succeq_i (\mu_t(j), \mu_{t'}(i))$, i.e., an exchange in period $t'$ eliminates agent $i$'s envy, so the allocation $\mu$ is swapEF.

\textbf{(\underline{Statement~\ref{it:lemSD2}}.)}

Suppose $i$ envies $j$: $\sum_t u_i(\mu_t(i)) < \sum_t u_i(\mu_t(j))$. Since $T \leq 4$, agent $j$ picks objects before agent $i$ in up to two periods, say $t$ and $t'$. For any other period $t'' \neq t, t'$, agent $i$ prefers $\mu_{t''}(i)$ over $\mu_{t''}(j)$.

Without loss of generality, suppose that $u_i(\mu_t(j)) - u_i(\mu_t(i)) \geq u_i(\mu_{t'}(j)) - u_i(\mu_{t'}(i))$. We assume the latter utility difference or both differences to be zero if $T \leq 3$, so that agent $j$ picks before $i$ in at most one period. The inequality means that agent $i$ prefers an exchange of period~$t$ objects over an exchange of period~$t'$ objects. Then,
$$
u_i(\mu_t(j)) + u_i(\mu_{t'}(i)) + \sum_{t'' \neq t, t'} u_i(\mu_{t''}(i)) \geq u_i(\mu_t(i)) + u_i(\mu_{t'}(j)) + \sum_{t'' \neq t, t'} u_i(\mu_{t''}(j)).
$$
Thus, an exchange in period~$t$ eliminates agent $i$'s envy.
\end{proof}

In general, alternating SD mechanism is neither EF1 nor swapEF.

\begin{example}[If $T \geq 3$, alternating SD is not EF1]
Suppose there are three agents, ordered $i > j > k$, and three objects $O_t = \{o, o', o''\}$ in each period $t = 1, 2, 3$. Agents have identical additively separable utilities $\sum_{t=1}^{3} u(o_t)$, where $u(o) = 100$, $u(o') = 2$, and $u(o'') = 1$. We omit $t > 3$ by assuming zero utility for those periods. 

By alternating SD, agent $i$ obtains a utility of $100 + 1 + 100$, whereas agent $j$ receives a utility of $2 + 2 + 2$. Agent $j$ envies agent $i$, even after dropping one object from agent $i$'s bundle.
\end{example}

\begin{example}[If $T \geq 5$, alternating SD is not swapEF.]
\label{ex:alt_SD_violateEF1X}
Take the same ordering of three agents ($i > j > k$) and the object set $O_t = \{o, o', o''\}$ for $t = 1, 3, 5$. Assume agents have identical additively separable utilities $\sum_{t = 1,3,5} u(o_t)$, with $u(o) = 100$, $u(o') = 2$, $u(o'') = 1$. We omit other periods $t \neq 1, 3, 5$ by assuming zero utility. 

By alternating SD, agent $i$ receives a utility of $100 + 100 + 100$, whereas agent $j$ receives $2 + 2 + 2$. Agent $j$ envies agent $i$, even after an exchange in any period.
\end{example}

\subsubsection{A TTC+SD mechanism}\label{sec:TTCSD}

We may now formally define swap bounded envy, a concept that we discussed in the introduction. A draft mechanism may not produce an EF1 and swapEF allocation in a multi-dimensional environment. We show that swapBE is obtained through a modification of SD based on a reallocation of bundles in each round of SD. In fact, the bound on envy is obtained round by round.

\begin{definition}
    \label{def:swapBE}
    A multi-dimensional allocation $\mu$ has \textbf{swap bounded envy (swapBE)} if $\forall i,j \in \mathcal{N}$, 
\begin{align*}
\mu(j) \succ_i \mu(i) \implies & \exists (o_1, \ldots, o_T) \in \Pi_t O_t \text{ and } t\in\{1, \dots, T\} \text{ such that } \\
& \mu(i) \succeq_i (o_1, \ldots, o_T) \text{ and } (\mu_t(j), o_{-t}) \succeq_i (o_t, \mu_{-t}(j)).
\end{align*}
\end{definition}

The motivation behind swapBE is similar to the motivation for swapEF, to bound envy by a one-period exchange. The difference is that, while swapEF considers a direct exchange between $\mu(i)$ and $\mu(j)$, swapBE identifies a reference bundle $(o_1, \dots, o_T)$ that is ranked below $\mu(i)$ in agent $i$'s preference list, but still lies within a one-exchange distance from $\mu(j)$.

\begin{remark}
\label{rmk:TTC_swapEF}
SwapBE is, arguably, more appealing than swapEF because it is robust to a reallocation that leaves the envying agent better off. To illustrate this, suppose agents $i$ and $j$ receive objects over 4 periods, and agent $i$'s additively separable utilities are $u_i(\mu(j)) = 80 + 80 + 80 + 45 > u_i(\mu(i)) = 100 + 97 + 1 + 1$. Thus, agent $i$ envies agent $j$, and a single exchange can eliminate this envy. Now, suppose a reallocation improves agent $i$'s bundle to $u_i(\mu'(i)) = 50 + 50 + 50 + 50$, while agent $j$'s bundle remains unchanged. swapEF no longer holds, because agent $i$ still envies $j$, but the envy can no longer be eliminated through a single exchange. Still, agent $i$'s envy toward $j$ is bounded by one exchange.

When a reallocation improves agent $i$'s bundle while agent $j$'s remains unchanged, both EF1 and swapBE are preserved, but swapEF may be violated.
\end{remark}

SwapBE is essentially a weaker requirement than swapEF. If a multi-dimensional allocation $\mu$ is swapEF, then $\mu(i)$ can be used as the bundle $(o_1, \dots, o_T)$ in an exchange with $\mu(j)$ in Definition~\ref{def:swapBE} (as long as $i$ does not consume the null object in any period in $\mu$).

\begin{remark}
Observe that the  reference bundle $(o_1, \dots, o_T)$ in Definition~\ref{def:swapBE} may not include an outside option in any period. Otherwise, an EF1 allocation would trivially satisfy swapBE. If an agent $i$ envies agent $j$ (i.e., $\mu(j) \succ_i \mu(i)$), and dropping an object $\mu_t(j)$ eliminates the envy (i.e., $\mu(i) \succeq_i (\emptyset_t, \mu_{-t}(j))$), then the bundle $(\emptyset_t, \mu_{-t}(j))$ is less preferred than $\mu(i)$, yet it can still eliminate the envy after a swap between $\emptyset_t$ and $\mu_t(j)$.
\end{remark}

\begin{remark}
    The properties of being swapBE and EF1 are logically unrelated. A swapBE allocation may not be EF1: See the example after Definition~\ref{def:swapEF}, where a swapEF allocation (which satisfies swapBE) violates EF1. And an EF1 allocation may not be swapBE: For an example, consider two agents and two houses allocated over three periods. Agent $1$ values house $1$ at $4$ and house $2$ at $5$ in each period. Let $\mu(1)=(1, 1, 1)$ and $\mu(2)=(2, 2, 2)$. Agent $1$ envies $2$. Removing house $2$ in any period from agent $2$'s bundle eliminates this envy, so $\mu$ is EF1. However, there is no bundle that is weakly less preferred (without an outside option) and an exchange that eliminates envy. Therefore, $\mu$ is not swapBE.
\end{remark}

We propose a mechanism that produces a multi-dimensional allocation that is EF1 and satisfies swapBE. A standard draft mechanism, which is not EF1, does not satisfy swapBE either. For example, Alternating SD in Example~\ref{ex:alt_SD_violateEF1X} does not satisfy swapBE. Agent~$i$ envies agent~$j$, because $u_i(\mu(i)) = 2 + 2 + 2$, whereas $u_i(\mu(j)) = 100 + 100 + 100$, and any bundle with utility less than or equal to $6$ cannot be within one-exchange distance from $\mu(j)$.

The idea behind the algorithm we propose for finding an EF1 and swapBE allocation is similar to that of the envy-graph procedure of \cite{lipton2004approximately}. When there are no multi-dimensional or equal-number constraints, the envy-graph procedure allocates objects to agents one by one, giving each object to an agent who is not envied by anyone else. If there is no such agent, the envy graph must contain a cycle, which can be eliminated by trading bundles among the agents in the cycle. Iteratively removing cycles ultimately identifies a non-envied agent who can then receive an object. If agents' preferences are separable, this procedure maintains the EF1 property of the allocation at every round.

In a multi-dimensional environment, objects are allocated over $T$ rounds. In each round $t$, we take the bundles allocated before round $t$ and apply the Top Trading Cycles (TTC) algorithm to eliminate envy cycles. Subsequently, the new objects in $O_t$ are allocated using Serial Dictatorship (SD). Non-envied agents -- those who received bundles in the last step of TTC -- pick first from $O_t$. Then, agents who are non-envied by the remaining agents -- those who received bundles in the second-to-last step of TTC -- pick from the remaining objects in $O_t$.\footnote{While we use TTC, any procedure that eliminates envy cycles would suffice. Once the envy cycles are removed, we can identify non-envied agents and, iteratively, identify agents who are non-envied by remaining others after removing the previously identified non-envied agents.}

Formally, \textbf{the TTC+SD mechanism} starts with an arbitrary allocation $\mu_1$ of period-1 objects. Since only period-1 objects are allocated, the initial allocation $\mu_1$ is trivially EF1 and has swapBE. 

Each round $t =2, \dots, T$ starts with an allocation $\mu_{<t}$ of objects from periods $1, \dots, t-1$. The procedure then runs TTC by constructing a graph in which the agents $\mathcal{N}$ and the bundles they have received, $\{ \mu_{<t}(i) \mid i \in \mathcal{N} \}$, are the vertices. There is an edge from $\mu_{<t}(i)$ to $i$ for each $i$, and an edge from $i$ to the agent's most preferred bundle, say $\mu_{<t}(j)$. Each step of TTC clears all cycles by exchanging the bundles.

Observe that if the input allocation $\mu_{<t}$ is EF1 and has swapBE, then the output allocation $\mu^{\text{ttc}}_{<t}$ also remains EF1 and has swapBE. Since TTC reallocates bundles rather than individual objects, agent $i$'s envy toward $j$ may simply shift to envy toward another agent. Moreover, agent $i$ gets better by TTC, so the envy can still be eliminated by dropping an object, or bounded by one exchange.

Next, we allocate period~$t$ objects using a SD procedure with an order taken from the outcome of the TTC. The previous TTC step yields a partition of agents into groups $A_1, A_2, \dots$, where $A_k$ is the $k$-th cycle cleared in the application of TTC. We order agents so that those in $A_k$ precede those in $A_{k'}$ for $k' < k$, with an arbitrary order within each group. Then we let agents who obtained bundles in the last step of TTC pick period~$t$ objects first. Then, we set those agents and the period~$t$ objects they take aside, and let the agents who obtained bundles in the second-last step of TTC pick among the remaining objects in $O_t$, and so on. 
This SD produces $\mu_t$; so we obtain $\mu_{\leq t} = (\mu^{ttc}_{<t}, \mu_t)$.

If agent $i$ envied agent $j$ before in $\mu^{ttc}_{<t}$, then agent $i$ picks a period~$t$ object before agent $j$. So, the envy can still be removed by one drop, and bounded by one exchange. A new envy may arise because an agent $i$ picks a period~$t$ object after agent $j$, but this envy can be removed by dropping the period~$t$ object, or bounded by a period~$t$ exchange. 

We need a restriction on preferences to guarantee that past choices do not get overturned by later allocations. Say that a preference $\succeq_i$ is \df{directionally separable} if, for any $t\geq 2$, 
$o_t \in O_t$, and 
$S_{<t},S'_{<t} \in\prod_{\tau=1}^{t-1} O_\tau$,
$$
S_{<t} \succeq_i S'_{<t} \iff 
(S_{<t}, {o_t}) \succeq_i (S'_{<t}, o_t) .$$

\begin{proposition}
\label{prop:TTC_SD}
In a multi-dimensional environment, suppose that agents have responsive and directionally separable preferences. Then the TTC+SD mechanism is EF1 and has swapBE.
\end{proposition}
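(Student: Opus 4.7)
The plan is to proceed by induction on the round number $t \in \{1, \ldots, T\}$, showing that at each round the intermediate allocation $\mu_{\leq t}$ (viewed as an allocation over periods $1, \ldots, t$) is EF1 and has swapBE in the natural $t$-period sense; the statement of the proposition is then the case $t = T$. The base case $t = 1$ is immediate, since any envy in $\mu_1$ can be eliminated by dropping the single envied object (EF1), and the reference bundle $(\mu_1(i))$ itself witnesses swapBE.

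For the inductive step, I would split the round-$t$ work into two parts. The TTC reallocation preserves both properties because TTC only improves each agent's bundle, i.e., $\mu^{ttc}_{<t}(i) \succeq_i \mu_{<t}(i)$. If $i$ envies $j$ in $\mu^{ttc}_{<t}$ with $\mu^{ttc}_{<t}(j) = \mu_{<t}(\ell)$ for some agent $\ell$, transitivity forces $\mu_{<t}(\ell) \succ_i \mu_{<t}(i)$, so $i$ already envied $\ell$ in $\mu_{<t}$; the inductive EF1 or swapBE certificate for that earlier envy transfers upward to $\mu^{ttc}_{<t}$ by composing with the improvement. For the SD step, the structural fact I would exploit is that if $i$ envies $j$ in $\mu^{ttc}_{<t}$, then $j$ must have been removed in an earlier TTC step than $i$ (otherwise $\mu^{ttc}_{<t}(j)$ was still available when $i$'s cycle resolved and $i$ would have pointed to it). The TTC+SD ordering therefore lets $i$ pick a period-$t$ object before $j$, giving $\{\mu_t(i)\} \succeq_i \{\mu_t(j)\}$.

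I would then handle envy in $\mu_{\leq t}$ via two cases: (i) persisting envy from $\mu^{ttc}_{<t}$, and (ii) envy newly created by the period-$t$ allocation. In case (i), I extend the inductive reference bundle on periods $1, \ldots, t-1$ by appending $\mu_t(i)$ in period $t$; directional separability lifts the two inductive inequalities from length $t-1$ to length $t$ after this append, and responsiveness combined with $\{\mu_t(i)\} \succeq_i \{\mu_t(j)\}$ closes the swap inequality at the period-$t$ coordinate. EF1 is handled analogously by the same append. In case (ii), the absence of earlier envy forces $\{\mu_t(j)\} \succ_i \{\mu_t(i)\}$ (otherwise responsiveness would preclude envy); taking $\mu_{\leq t}(i)$ itself as the reference bundle and swapping in period $t$ yields the swapBE certificate, while EF1 follows by dropping $\mu_t(j)$ and chaining directional separability with monotonicity.

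The main obstacle is keeping the swapBE reference bundle free of null objects, since Definition~\ref{def:swapBE} requires $(o_1, \ldots, o_T) \in \Pi_t O_t$ with no $\emptyset_t$ entries. This is precisely where directional separability does work that mere responsiveness cannot: it lets me append a concrete period-$t$ object, namely $\mu_t(i) \in O_t$, to lift $(t-1)$-period inequalities to $t$-period ones without ever invoking any property involving $\emptyset_t$. Full additive separability would be stronger than necessary, but directional separability is exactly calibrated to make the period-by-period induction go through.
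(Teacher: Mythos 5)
Your proposal is correct and follows essentially the same route as the paper's proof: an induction over rounds that separately shows the TTC reallocation preserves EF1 and swapBE (via individual rationality and the transfer of certificates to the agent who previously held the envied bundle) and that the SD step preserves them (via the cycle-clearing order forcing $\mu_t(i)\succeq_i\mu_t(j)$ when prior envy exists, with directional separability and responsiveness used to append a period-$t$ object to the reference bundle). The only cosmetic difference is that you append $\mu_t(i)$ to the reference bundle where the paper appends $\mu_t(j)$; both choices close the required inequalities.
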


The TTC+SD mechanism may not satisfy swapEF because the TTC reallocation of bundles may not preserve swapEF (see Remark~\ref{rmk:TTC_swapEF}).

\begin{proof} 
\begin{claim}
If $\mu_{<t}$ is EF1 (swapBE), then $\mu^{ttc}_{<t}$ is EF1 (swapBE).
\end{claim}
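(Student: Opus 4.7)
The plan is to exploit two structural properties of a TTC step: (i) every agent weakly improves, meaning $\mu^{ttc}_{<t}(i) \succeq_i \mu_{<t}(i)$, because in each TTC round agent $i$'s outgoing edge points to her most preferred bundle in the current graph, and her own bundle is always a candidate; and (ii) TTC merely permutes the multiset of bundles, so there is a permutation $\pi$ on $\mathcal{N}$ with $\mu^{ttc}_{<t}(j) = \mu_{<t}(\pi(j))$ for every $j$. Both are immediate from the description of TTC, and together they let me translate any envy present in $\mu^{ttc}_{<t}$ into an envy present in $\mu_{<t}$, so that I can invoke the hypothesized EF1 or swapBE property of the input.

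For the EF1 half, I would start from an envy $\mu^{ttc}_{<t}(j) \succ_i \mu^{ttc}_{<t}(i)$, rewrite the left-hand side as $\mu_{<t}(\pi(j))$ by (ii), and chain with (i) to conclude $\mu_{<t}(\pi(j)) \succ_i \mu_{<t}(i)$, so $i$ also envied $\pi(j)$ in $\mu_{<t}$. EF1 of $\mu_{<t}$ then produces a period $t'$ with $\mu_{<t}(i) \succeq_i (\os_{t'}, \mu_{<t}(\pi(j))_{-t'})$, and a final application of (i) on the left together with the substitution $\mu_{<t}(\pi(j)) = \mu^{ttc}_{<t}(j)$ on the right yields $\mu^{ttc}_{<t}(i) \succeq_i (\os_{t'}, \mu^{ttc}_{<t}(j)_{-t'})$, which is exactly the required EF1 witness.

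For swapBE I would run the identical reduction: the envy in $\mu^{ttc}_{<t}$ reduces via (i) and (ii) to an envy of $i$ toward $\pi(j)$ in $\mu_{<t}$. The swapBE hypothesis then delivers a reference bundle $(o_1,\ldots,o_T) \in \prod_t O_t$ and a period $t'$ with $\mu_{<t}(i) \succeq_i (o_1,\ldots,o_T)$ and $(\mu_{<t}(\pi(j))_{t'}, o_{-t'}) \succeq_i (o_{t'}, \mu_{<t}(\pi(j))_{-t'})$. Substituting $\mu_{<t}(\pi(j)) = \mu^{ttc}_{<t}(j)$ in the swap comparison and promoting $\mu_{<t}(i)$ to $\mu^{ttc}_{<t}(i)$ in the reference-bundle comparison via (i) gives precisely the swapBE witness for $i$'s envy toward $j$ in $\mu^{ttc}_{<t}$.

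The only nontrivial ingredient is property (i), which is why the argument needs TTC (or any envy-cycle clearing procedure that exchanges only along weakly preferred cycles) rather than an arbitrary reshuffling; once (i) and (ii) are in hand, neither responsiveness nor directional separability has to be invoked for this claim, since every comparison takes place at the level of whole bundles inherited from the input allocation. I therefore expect no real obstacle beyond stating (i) and (ii) cleanly.
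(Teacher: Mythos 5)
Your proposal is correct and follows essentially the same route as the paper: both arguments rest on the individual rationality of TTC ($\mu^{ttc}_{<t}(i)\succeq_i\mu_{<t}(i)$) and on the fact that TTC permutes whole bundles, so the envied bundle $\mu^{ttc}_{<t}(j)$ equals $\mu_{<t}(k)$ for some agent $k$ (your $\pi(j)$), which lets the EF1 or swapBE witness for $\mu_{<t}$ be carried over verbatim. The only cosmetic difference is that the paper names the pre-image agent $k$ directly rather than introducing a permutation, and notes $k\neq i$ (which in your setup follows automatically, since $\pi(j)=i$ would contradict the assumed envy).
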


To prove the claim, suppose that $\mu^{\text{ttc}}_{<t}(j) \succ_i \mu^{\text{ttc}}_{<t}(i)$, i.e., agent $i$ envies agent $j$. Since TTC is individually rational, and reallocates bundles rather than individual objects, for some agent $k \neq i$,
\[
\mu_{<t}(k) = \mu^{\text{ttc}}_{<t}(j) \succ_i \mu^{\text{ttc}}_{<t}(i) \succeq_i \mu_{<t}(i).
\]

If $\mu_{<t}$ is EF1, there exists a period $s \in [t-1]:=\{1, 2, \dots, t-1\}$ such that
\[
\mu^{\text{ttc}}_{<t}(i) \succeq_i \mu_{<t}(i) \succeq_i (\emptyset_s, \mu_{[t-1] \setminus s}(k)) = (\emptyset_s, \mu^{\text{ttc}}_{[t-1] \setminus s}(j)),
\]
which implies that $\mu^{\text{ttc}}_{<t}$ is EF1.

Similarly, if $\mu_{<t}$ has swapBE, there exists a bundle $(o_1, \ldots, o_{t-1}) \preceq_i \mu_{<t}(i) \preceq_i \mu^{\text{ttc}}_{<t}(i)$, less preferred but still making agent $i$ envy-free after an exchange with $\mu_{<t}(k)$ in some period $s < t$. Then,
\[
(o_s, \mu^{\text{ttc}}_{[t-1] \setminus s}(j)) = 
(o_s, \mu_{[t-1] \setminus s}(k))
\preceq_i (\mu_s(k), o_{[t-1] \setminus s}) =(\mu^{ttc}_s(j), o_{[t-1] \setminus s}),
\]
which implies that $\mu^{\text{ttc}}_{<t}$ is swapBE.

\begin{claim}
If $\mu^{ttc}_{<t}$ is EF1 (swapBE), then $\mu_{\leq t} = (\mu^{ttc}_{<t}, \mu_t)$ is EF1 (swapBE).
\end{claim}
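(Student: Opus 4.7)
Fix $i, j \in \mathcal{N}$ with $\mu_{\leq t}(j) \succ_i \mu_{\leq t}(i)$, and split the argument on whether the envy was already present in $\mu^{ttc}_{<t}$. In each case the goal is to transport the EF1 (respectively, swapBE) witness from $\mu^{ttc}_{<t}$ to $\mu_{\leq t}$ by appending a period-$t$ component, using directional separability together with per-period responsiveness.

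The key structural input is how the TTC clearing order interacts with the SD order over $O_t$: agents cleared in a later TTC cycle $A_k$ pick period-$t$ objects before agents cleared in an earlier cycle $A_{k'}$, $k' < k$. I claim that if $\mu^{ttc}_{<t}(j) \succ_i \mu^{ttc}_{<t}(i)$, then $j$'s cycle cleared strictly before $i$'s cycle, and hence $\mu_t(i) \mathrel{R_i} \mu_t(j)$. The reason is that when $i$ exits TTC, $i$ points at its most preferred bundle remaining in the pool, namely $\mu^{ttc}_{<t}(i)$; hence $\mu^{ttc}_{<t}(j)$ must have already been removed from the pool, i.e., the cycle delivering that bundle to $j$ cleared earlier.

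In Case~1 (pre-existing envy, so $\mu_t(i) \mathrel{R_i} \mu_t(j)$), let $s \in [t-1]$ witness EF1 of $\mu^{ttc}_{<t}$. Appending $\mu_t(j)$ to both sides by directional separability, then strengthening the left via per-period responsiveness (using $\mu_t(i) \mathrel{R_i} \mu_t(j)$), yields $\mu_{\leq t}(i) \succeq_i (\emptyset_s, \mu^{ttc}_{[t-1]\setminus s}(j), \mu_t(j))$, which is EF1 of $\mu_{\leq t}$ with the same period~$s$. For swapBE, let $(o_1,\ldots,o_{t-1})$ and $s\in[t-1]$ be the swapBE witness; take $(o_1,\ldots,o_{t-1},\mu_t(i))$ as the new reference bundle. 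It lies in $\prod_{\tau\leq t} O_\tau$ because by a short induction the SD step allocates every object in each $O_\tau$, so $\mu_t(i)\in O_t$. The two required inequalities---that the new reference is weakly below $\mu_{\leq t}(i)$, and that the period-$s$ swap with $\mu_{\leq t}(j)$ is weakly above it---follow by the same pairing of directional separability (to append a common period-$t$ component) and per-period responsiveness ($\mu_t(i)\mathrel{R_i}\mu_t(j)$).

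In Case~2 (envy is new), $\mu^{ttc}_{<t}(i) \succeq_i \mu^{ttc}_{<t}(j)$, and combining directional separability with per-period responsiveness forces $\mu_t(j) \mathrel{R_i} \mu_t(i)$ (otherwise one would get $\mu_{\leq t}(i) \succeq_i \mu_{\leq t}(j)$, contradicting envy). EF1 then holds with $s = t$ by monotonicity: $\mu_{\leq t}(i) \succeq_i \mu^{ttc}_{<t}(i) \succeq_i \mu^{ttc}_{<t}(j) = (\emptyset_t,\mu^{ttc}_{<t}(j))$, the bundle $\mu_{\leq t}(j)$ with $\mu_t(j)$ dropped. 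For swapBE use $\mu_{\leq t}(i)$ itself as the reference bundle and swap in period $t$: $(\mu_t(j),\mu^{ttc}_{<t}(i)) \succeq_i (\mu_t(i),\mu^{ttc}_{<t}(j))$ follows by chaining $\mu^{ttc}_{<t}(i) \succeq_i \mu^{ttc}_{<t}(j)$ (directional separability, appending $\mu_t(j)$) with $\mu_t(j) \mathrel{R_i} \mu_t(i)$ (per-period responsiveness, appending $\mu^{ttc}_{<t}(j)$). The main obstacle is the structural lemma about the TTC clearing order; once that is established, the case analysis and inequality manipulations are routine.
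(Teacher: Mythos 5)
Your proof is correct and follows essentially the same route as the paper's: split on whether $i$'s envy of $j$ pre-existed in $\mu^{ttc}_{<t}$, use the TTC-cycle/SD-order interaction to get $\mu_t(i)\mathrel{R_i}\mu_t(j)$ in the pre-existing case, and transport the EF1/swapBE witness by appending a period-$t$ component via directional separability and responsiveness. The only (harmless) deviations are that you make the SD-order fact explicit and use $(o_1,\ldots,o_{t-1},\mu_t(i))$ as the new reference bundle where the paper uses $(o_1,\ldots,o_{t-1},\mu_t(j))$; both choices satisfy the two swapBE inequalities.
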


\textbf{(EF1)} Suppose $\mu_{\leq t}(j) \succ_i \mu_{ \leq t}(i)$: agent $i$ envies $j$. If $\mu^{ttc}_{< t}(i) \succeq_i \mu^{ttc}_{< t}(j)$, then
\[\mu_{\leq t}(i) =(\mu^{ttc}_{< t}(i), \mu_t(i)) \succeq_i (\mu^{ttc}_{< t}(i), \emptyset_t) \succeq_i (\mu^{ttc}_{< t}(j), \emptyset_t).
\]
On the other hand, if $\mu^{ttc}_{< t}(j) \succ_i \mu^{ttc}_{< t}(i)$, then agent $i$ picked a period~$t$ object before $j$, so $\mu_t (i) \succeq_i \mu_t(j)$. Since $\mu^{ttc}_{<t}$ is EF1, there exists $s <t$ such that $\mu^{ttc}_{< t}(i) \succeq_i (\emptyset_s, \mu^{ttc}_{[t-1] \setminus s}(j))$. Thus, 
\begin{align*}
    \mu_{\leq t} (i) 
    & = (\mu^{ttc}_{< t} (i), \mu_t (i)) \succeq_i (\emptyset_s, \mu^{ttc}_{[t-1] \setminus s}(j), \mu_t(i)) \quad \text{(by directional separability)}\\
    & \succeq_i (\emptyset_s, \mu^{ttc}_{[t-1] \setminus s}(j), \mu_t(j)) = (\emptyset_s, \mu_{[t] \setminus s}(j)). \quad \text{(by responsiveness)}
\end{align*}
In either case, dropping one object eliminates $i$'s envy toward $j$, so the allocation $\mu_{\leq t}$ is EF1.

\textbf{(SwapBE)} Suppose $\mu_{\leq t}(j) \succ_i \mu_{ \leq t}(i)$: agent $i$ envies $j$. 

If $\mu^{ttc}_{< t}(i) \succeq_i \mu^{ttc}_{< t}(j)$, then by directional separability and responsiveness, the envy must be from $\mu_t(j) \succ_i \mu_t(i)$. Then, $(\mu^{ttc}_{< t}(i), \mu_t(j)) \succeq_i (\mu^{ttc}_{< t}(j), \mu_t(i))$, so the envy is eliminated by one exchange.

On the other hand, if $\mu^{ttc}_{< t}(j) \succ_i \mu^{ttc}_{< t}(i)$, then agent $i$ picks a period~$t$ object before $j$, so $\mu_t (i) \succeq_i \mu_t(j)$. Since $\mu^{ttc}_{<t}$ is swapBE, there exists a bundle $(o_1, \dots, o_{t-1}) \preceq_i \mu^{ttc}_{<t}(i)$, less preferred but making agent $i$ envy-free after an exchange with $ \mu^{ttc}_{<t}(j)$ in some period $s < t$:
\[
(\mu^{ttc}_s(j), o_{[t-1] \setminus s}) \succeq_i (o_s, \mu^{ttc}_{[t-1] \setminus s}(j)).
\]
Then, by responsiveness and directional separability,
\[
(o_1, \dots, o_{t-1}, \mu_t(j)) \preceq_i (o_1, \dots, o_{t-1}, \mu_t(i)) \preceq_i (\mu^{ttc}_{<t}(i), \mu_t(i)) = \mu_{ \leq t}(i)
\]
and, but directional separability, 
\[
(\mu^{ttc}_s(j), o_{[t-1] \setminus s}, \mu_t(j)) \succeq_i (o_s, \mu^{ttc}_{[t-1] \setminus s}(j), \mu_t(j)).
\]
The bundle $(o_1, \dots, o_{t-1}, \mu_t(j))$ is less preferred to $\mu_{\leq t}(i)$ but can make agent $i$ envy-free after an exchange with $\mu_{\leq t} (j)$ in period~$s$. Thus, $\mu_{\leq t}$ has swapBE.
\end{proof}

\section{Social welfare maximization}
\label{sec:nashwelfare}
Neither the draft mechanisms in equal-number allocations, or the TTC+SD mechanism in multi-dimensional allocations, is designed to obtain a Pareto efficient allocation. Due to the Pareto inefficiency of those mechanisms, we resort to an alternative approach to obtain an allocation that is both Pareto efficient, and fair in the sense of EF1, swapEF, or swapBE.

In a general allocation problem that allows for an uneven distribution of objects, when agents have additive separable utilities, Nash-welfare maximization is known to obtain an EF1 allocation that is also Pareto efficient. To elaborate, let each agent's preference $\succeq_i$ be represented by additively separable utilities $u_i: \mathcal{O} \to \mathbf{R}_+$, and the payoff from an allocation $\mu$ be $\sum_{o \in \mu(i)} u_i(o)$. The \textbf{Nash welfare} of $\mu$ is 
$$NW(\mu)= \prod_i \left(\sum_{o \in \mu(i)} u_i(o) \right).$$ 
A maximum Nash-welfare allocation is EF1 and Pareto efficient (\cite{unreasonable_fairness}).\footnote{Pareto efficiency of a maximum Nash-welfare allocation is easy to verify. To understand the proof of EF1, suppose an allocation gives one object to agent 1 with utility $a > 0$, and two objects to agent 2 with utilities $x + y > 0$, yielding Nash welfare $a(x + y)$. Let $a', x', y'$ be the utilities if the corresponding object is allocated differently—either to agent 2 instead of agent 1, or vice versa. Suppose the allocation is not EF1 because agent 1 envies agent 2 ($a < x' + y'$), even after removing either object from agent 2’s bundle ($a < x'$ and $a < y'$). If $\frac{x}{x'} \leq \frac{y}{y'}$ (we omit the other case), then $a x < y' x \leq y x' \implies a(x + y) < (a + x') y$, so the allocation is not Nash-welfare maximizing.}

\subsection{Equal-number environment}

We observed before that an uneven distribution of objects can violate swapEF, which is why we have focused on an equal-number allocation environment. For the same reason, if Pareto efficient allocations (including maximum Nash-welfare allocations) require an unequal distribution of objects, no efficient allocation would satisfy swapEF:
\begin{example}
There are $n = 3$ agents and $m = 6$ objects. 
Agents have additively separable preferences, where the utility from a liked object is $1$, and $0$ otherwise. Agent $i$ likes only $o_1$ and $o_3$, agent $j$ likes only $o_2$ and $o_3$, while agent $k$ likes only $o_4$, $o_5$, and $o_6$. In a maximum Nash welfare allocation, and indeed in any Pareto efficient allocation, agent $k$ receives $o_4$, $o_5$, and $o_6$, agent $i$ receives $o_1$, and agent $j$ receives $o_2$, with $o_3$ allocated to either agent $i$ or $j$. In either case, agent $i$ (or $j$) envies agent $j$ (or $i$), even after an exchange of one good.
\end{example}

Hence, we restrict attention to the set of equal-number allocations, where each agent receives exactly $K$ objects:
$$
\mathcal{E} = \{ \mu : |\mu(i)| = K \text{ for all } i \in \mathcal{N} \}
$$
Then, for the case of 0-1 utilities, maximizing Nash welfare over $\mathcal{E}$ yields a Pareto-efficient allocation within $\mathcal{E}$ that is EF1 and swapEF. 

The following result was obtained by \cite{bogomolnaia2025teams}. See their Proposition 15. Again we include a proof for completeness.

\begin{proposition}\label{prop:maxNashW}
Suppose $m = nK$ for some $K \in \mathbb{N}$, and that each preference $\succeq_i$ over $2^{\mathcal{O}}$ can be represented by additive utilities taking values in $\{0,1\}$. If an allocation $\mu$ maximizes Nash welfare in $\mathcal{E} = \{ \mu : |\mu(i)| = K \text{ for all } i \in \mathcal{N} \}$ and $NW(\mu) > 0$,

then $\mu$ is EF1, swapEF, and Pareto efficient within $\mathcal{E}$.
\end{proposition}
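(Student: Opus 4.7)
My plan is to establish the three properties in succession, exploiting the $0$--$1$ structure throughout. Pareto efficiency within $\mathcal{E}$ is essentially immediate: since $NW(\mu)>0$, every factor $u_i(\mu(i))$ is at least $1$, so any allocation $\mu'\in\mathcal{E}$ that Pareto-dominates $\mu$ would weakly increase each factor and strictly increase at least one, contradicting NW-maximality.

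The heart of the argument is EF1. Fix $i,j$ and write $a=u_i(\mu(i))$, $b=u_j(\mu(j))$, and $c=u_i(\mu(j))$; assume $i$ envies $j$, i.e., $c>a$. I aim to show $c\le a+1$, for then removing any object $o\in\mu(j)$ with $u_i(o)=1$ (one exists because $c\ge 1$) reduces $c$ to at most $a$, giving EF1. I will argue by contradiction: assume $c\ge a+2$ and produce a swap between $i$ and $j$ that strictly raises Nash welfare. Two preliminary observations help. First, $a<K$ (otherwise $c\le K=a$), so the set $A:=\{o'\in\mu(i):u_i(o')=0\}$ is nonempty; second, the set $B:=\{o\in\mu(j):u_i(o)=1\}$ has size $|B|=c\ge a+2\ge 2$.

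The main step is a short case analysis on $j$'s valuations. Every swap I consider moves an element $o\in B$ from $j$ to $i$ and an element $o'\in A$ from $i$ to $j$; this raises $u_i$ on $i$'s bundle from $a$ to $a+1$. If some $o\in B$ has $u_j(o)=0$, then for any $o'\in A$ the new $u_j$ is at least $b$, and NW strictly improves. Otherwise every $o\in B$ is $j$-liked, which forces $b\ge|B|=c\ge a+2$. Within this regime, if some $o'\in A$ is also $j$-liked, the swap keeps $j$'s utility at $b$ and NW improves by the factor $(a+1)/a>1$. If instead no $o'\in A$ is $j$-liked, the swap drops $j$'s utility to $b-1$, but then the NW change equals $(a+1)(b-1)-ab = b-a-1\ge 1$ precisely because of the inequality $b\ge a+2$ forced by the case assumption; so NW still strictly improves. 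All three cases contradict NW-maximality, establishing $c\le a+1$.

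Given EF1, swapEF is cheap in the $0$--$1$ setting: picking any $o\in B$ and any $o'\in A$, the swapped bundles give $i$ a utility of $a+1$ on her own and $c-1\le a$ on $j$'s, eliminating envy. The main obstacle I anticipate is the last sub-case, where all of $B$ is $j$-liked and all of $A$ is $j$-disliked; there a single swap necessarily costs $j$ one unit, and the argument closes only because the chain $b\ge c\ge a+2$ built into that very sub-case provides exactly the slack needed to pay for the drop. This is the one place where the full strength of $0$--$1$ utilities (rather than mere additivity) is used.
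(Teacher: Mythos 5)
Your proof is correct and follows essentially the same route as the paper's: both arguments locate an object $o'\in\mu(i)$ with $u_i(o')=0$ and an object $o\in\mu(j)$ with $u_i(o)=1$, and show that swapping them strictly increases Nash welfare unless $u_i(\mu(j))=u_i(\mu(i))+1$, from which EF1 and swapEF follow immediately. The only cosmetic difference is organizational: the paper first proves $u_j(\mu(j))\ge u_i(\mu(j))$ as a standalone step and then runs a single computation $(a+1)(b-1)>ab$, whereas you fold that observation into a three-way case split on $j$'s valuations of the swapped items.
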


\begin{proof}
Suppose that an allocation $\mu$ obtains the maximum Nash welfare in $\mathcal{E}$ and that $NW(\mu) > 0$. It is easy to verify that $\mu$ is Pareto efficient in $\mathcal{E}$, so we omit the proof.

To prove that $\mu$ is EF1 and swapEF, suppose that agent $i$ envies agent $j$, i.e., $\mu(j) \succ_i \mu(i)$. Then:

\begin{enumerate}
\item $u_i(\mu(j)) \geq u_i(\mu(i))+1$, because $i$ envies $j$. Moreover, since both agents receive exactly $K$ objects, there must be an object $o_i \in \mu(i)$ such that $u_i(o_i) = 0$.
\item $u_j(\mu(j)) \geq u_i(\mu(j))$. Otherwise, there exists an object $o' \in \mu(j)$ such that $u_j(o') = 0$ but $u_i(o') = 1$. Then, by exchanging $o_i \in \mu(i)$ and $o' \in \mu(j)$, we would obtain strictly higher Nash welfare than $NW(\mu)$.
\item In fact, $u_i(\mu(j)) = u_i(\mu(i)) + 1$. Otherwise, if $u_i(\mu(j)) > u_i(\mu(i)) + 1$, then exchanging $o_i \in \mu(i)$ with any $o' \in \mu(j)$ such that $u_i(o') = 1$ yields:
\begin{align*}
u_i(\mu(i)) \times u_j(\mu(j)) & < (u_i(\mu(i)) + 1) \times (u_j(\mu(j)) - 1) \\
& \leq u_i(\mu(i) \cup \{o'\} \setminus \{o_i\}) \times u_j(\mu(j) \cup \{o_i\} \setminus \{o'\}),
\end{align*}
contradicting the Nash-welfare maximization of $\mu$.
\end{enumerate}

Finally, take any object $o' \in \mu(j)$ such that $u_i(o') = 1$. Removing $o'$ from $\mu(j)$ results in $u_i(\mu(j) \setminus \{o'\}) = u_i(\mu(i))$, eliminating agent $i$'s envy toward agent $j$. Hence, $\mu$ is EF1.

Moreover, exchanging $o_i \in \mu(i)$ and $o' \in \mu(j)$ results in $u_i(\mu(i) \cup \{o'\} \setminus \{o_i\}) = u_i(\mu(i)) + 1 = u_i(\mu(j)) > u_i(\mu(j) \cup \{o_i\} \setminus \{o'\})$, which also eliminates agent $i$'s envy toward agent $j$. Hence, $\mu$ is swapEF.
\end{proof}

\subsection{Multi-dimensional environment}

\subsubsection{Nash welfare maximization}
Nash welfare maximization does not satisfy swapBE, and therefore is not swapEF either, in multi-dimensional allocations:
\begin{example}
Suppose that there are three agents with additively separable utilities, and the same set of three objects $\{o, o', o''\}$ across six periods. Assume that $u_1(o)=1$, $u_1(o')=2$, and $u_1(o'') = 30$ for agent $1$, and $u_i(o)=1$, $u_i(o')=2$, and $u_i(o'') = 3000$ for agents $i=2, 3$. 

A maximum Nash welfare allocation gives the object $o''$ to agent 2 or 3, to each agent for three periods. The allocation does not have swapBE, because any bundle with utility $2 \times 6$ (getting $o'$ for six periods) or lower cannot be improved enough by a single exchange to match a bundle containing $o''$ for two or more periods.
\end{example}

An optimization tends to fail to produce a fair allocation, because if two agents prefer a few objects but only one has strong preferences, social welfare maximization tends to assign \emph{all} of the preferred objects to that agent.

However, even when agents have the same utility magnitudes, for example, when the utility from the most preferred object is the same for all agents, social welfare maximization still fails to be swapEF:

\begin{example}
\label{example:equal_utility_magnitudes}
Suppose that there are five agents $N=\{i, j, k_1, k_2, k_3\}$ and five objects in each of the three periods. Agents have additively separable utilities given as: 
$$
\begin{bmatrix}
        U_i \\
        U_j \\
        U_{k1} \\
        U_{k2}\\
        U_{k3}
\end{bmatrix}
=
\begin{bmatrix}
        2 & 3 & \underline{4} & 1 & 0\\
        \underline{2} & 4 & 3 & 1 & 0\\
        1 & \underline{4} & 3 & 2 & 0\\
        1 & 4 & 3 & \underline{2} & 0\\
        1 & 4 & 3 & 0 & \underline{2}\\
\end{bmatrix}
+
\begin{bmatrix}
        2 & 3 & \underline{4} & 1 & 0\\
        \underline{2} & 4 & 3 & 1 & 0\\        
        1 & 4 & 3 & \underline{2} & 0\\
        1 & \underline{4} & 3 & 2& 0\\
        1 & 4 & 3 & 0& \underline{2}\\
\end{bmatrix}
+
\begin{bmatrix}
        2 & 3 & \underline{4} & 1 & 0\\
        \underline{2} & 4 & 3 & 1 & 0\\
        1 & 4 & 3 & \underline{2}& 0\\
        1 & 4 & 3 & 0& \underline{2}\\
        1 & \underline{4} & 3 & 2 & 0\\
\end{bmatrix}
$$
For example, agent $i$'s utility from the first object in period 1 is 2, and the second object in period 2 is 3, and the fifth object in period 3 is 0. Although agents have heterogeneous preferences, the utilities from objects ranked the same are identical across agents. The allocation marked by underlines in the matrices maximizes Nash welfare, but it is not swapEF.
\end{example}

Verifying Nash welfare maximization in Example~\ref{example:equal_utility_magnitudes} needs a brute-force search, but the intuition is as follows: Agent $ i $'s most preferred object differs from those of the other agents, allowing her to obtain it in each period. The last two objects in each period (six objects in total from three periods) must be allocated to the three $ k $ agents, ensuring each receives a utility of $ 4 $ (i.e., $ 2 + 2 $). Thus, the most preferred object for the four competing agents in each period should go to a $ k $ agent rather than to agent $ j $, because without it, agent $ j $ can still secure a utility of $ 2 $ each period.

The allocation is not swapEF because:
\begin{itemize}
    \item $ j $ envies $ i $: $ 2 + 2 + 2 < 3 + 3 + 3 $,
    \item $ j $ still envies $ i $ after the period-1 exchange: $ 3 + 2 + 2 < 2 + 3 + 3 $,
    \item and similarly for any other one-period exchanges.
\end{itemize}

Similarly, a utilitarian-welfare maximizing allocation does not satisfy swapEF either. 

In Example~\ref{example:equal_utility_magnitudes}, the highest possible utilitarian welfare is $42$, but achieving this requires that, in each period, agent $i$ receives the third object (for a total utility of $4 \times 3$), agent $j$ receives the first object (for a total utility of $2 \times 3$), and each agent $k$ receives a utility of $4$, $2$, or $2$ (for a total utility of $(4 + 2 + 2) \times 3$). The utilitarian welfare maximizing allocation is not swapEF.

\subsubsection{Submodular welfare maximization}
\label{sec:submodular}

In a multi-dimensional environment, we show that when all agents have identical preferences, optimizing a submodular welfare function produces an allocation that is Pareto efficient, EF1, and swapEF (and therefore has swapBE).

Suppose that all agents have the same separable preferences $\succeq$ over bundles $\prod_t O_t$. Let $f: \prod_t O_t \to \Re$ be a social welfare function that is strictly increasing and submodular in $\succ$: if $o'_t\succ o_t$ and $o'_{-t}\succ o_{-t}$ then \[ 
f(o'_t,o'_{-t}) - f(o_t,o'_{-t}) < f(o'_t,o_{-t}) - f(o_t,o_{-t}).
\]
A social welfare of an allocation $\mu$ is $\sum_{i\in \mathcal{N}} f(\mu_1(i),\ldots, \mu_T(i))$. The submodularity assumption is kind of ``decreasing differences,'' meaning that the marginal benefit of improving the object in period $t$ by going from $o_t$ to $o'_t$ is greater when the consumption in the remaining periods is worse.

\begin{proposition}
If all agents have the same additively separable 
preferences over bundles, an allocation that maximizes a submodular social welfare is Pareto efficient, EF1, and swapEF (and therefore has swapBE).

\end{proposition}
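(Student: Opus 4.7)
The plan is to handle Pareto efficiency directly and then derive EF1 and swapEF from a single cross-swap consequence of optimality and submodularity; swapBE follows from swapEF, as noted earlier in the paper. Pareto efficiency is routine: any Pareto improvement over $\mu$ would, by strict monotonicity of $f$ in the common preference $\succ$, strictly raise $\sum_i f(\mu(i))$, contradicting optimality.

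The key lemma is: for any optimal $\mu$, any pair of agents $i,j$, and any period $t$, one cannot have both $\mu_t(j)\succ\mu_t(i)$ and $\mu_{-t}(j)\succ\mu_{-t}(i)$. Indeed, if both strict preferences held, submodularity applied to $o'_t=\mu_t(j)$, $o_t=\mu_t(i)$, $o'_{-t}=\mu_{-t}(j)$, $o_{-t}=\mu_{-t}(i)$ would give
\[
f(\mu_t(j),\mu_{-t}(j))+f(\mu_t(i),\mu_{-t}(i))<f(\mu_t(j),\mu_{-t}(i))+f(\mu_t(i),\mu_{-t}(j)),
\]
so exchanging the period-$t$ objects of agents $i$ and $j$ would strictly raise social welfare, contradicting optimality.

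To deduce EF1 and swapEF, suppose agent $i$ envies agent $j$, i.e.\ $\mu(j)\succ\mu(i)$. Let $u$ be the common additively separable utility and set $\Delta_t=u(\mu_t(j))-u(\mu_t(i))$, so that $\sum_t\Delta_t>0$. Pick $t^*\in\argmax_t\Delta_t$; then $\Delta_{t^*}>0$, so $\mu_{t^*}(j)\succ\mu_{t^*}(i)$. The cross-swap lemma forces $\mu_{-t^*}(j)\preceq\mu_{-t^*}(i)$, i.e.\ $\sum_{s\neq t^*}\Delta_s\leq 0$, and in particular $\Delta_{t^*}\geq\sum_{s\neq t^*}\Delta_s$. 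Rearranging under additive separability yields $u(\mu_{t^*}(j))+u(\mu_{-t^*}(i))\geq u(\mu_{t^*}(i))+u(\mu_{-t^*}(j))$, so $(\mu_{t^*}(j),\mu_{-t^*}(i))\succeq(\mu_{t^*}(i),\mu_{-t^*}(j))$ and swapEF holds at $t^*$. For EF1 at the same $t^*$, combining $u(\mu_{-t^*}(j))\leq u(\mu_{-t^*}(i))$ with the monotonicity bound $u(\mu_{t^*}(i))\geq u(\emptyset_{t^*})$ gives $u(\mu(i))\geq u(\emptyset_{t^*})+u(\mu_{-t^*}(j))=u(\emptyset_{t^*},\mu_{-t^*}(j))$, i.e.\ $\mu(i)\succeq(\emptyset_{t^*},\mu_{-t^*}(j))$.

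The main technical content is the cross-swap lemma; after that, everything reduces to additive-separability bookkeeping. The nontrivial insight is that the same period $t^*$, the one maximizing the single-period envy gap $\Delta_t$, serves simultaneously as the EF1 drop period and the swapEF exchange period, and this works precisely because submodularity delivers the inequality $\sum_{s\neq t^*}\Delta_s\leq 0$.
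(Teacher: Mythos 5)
Your proof is correct and follows essentially the same route as the paper: the entire weight is carried by the observation that if $\mu_t(j)\succ\mu_t(i)$ and $\mu_{-t}(j)\succ\mu_{-t}(i)$ held simultaneously, submodularity would make the period-$t$ swap welfare-improving, contradicting optimality. The paper runs this as two separate proofs by contradiction (one for EF1, one for swapEF) while you factor it out as a single lemma and deduce both properties constructively from $\sum_{s\neq t^*}\Delta_s\leq 0$, which is a tidy but mathematically equivalent reorganization.
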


\begin{proof}

\textbf{(Pareto efficiency)} The proof is trivial, so we omit. 

\textbf{(EF1)} Suppose that $\mu$ is Pareto efficient but not EF1. In particular, suppose that $\mu(j) \succ_i \mu(i)$ and that, for all $t$, $(\emptyset_t,\mu_{-t}(j)) \succ_i  \mu(i)$. Then, since $\mu_t(i) \succeq_i \emptyset_t$, we have $\mu_{-t}(j) \succ_i \mu_{-t}(i)$ for all $t$. 

There exists $t$ such that $\mu_t(j) \succ_i \mu_t(i)$, for otherwise $\mu_t(i) \succeq_i \mu_t(j)$ for all $t$, contradicting $\mu(j) \succ_i \mu(i)$. Now by submodularity we have:
\[ 
f(\mu_t(j),\mu_{-t}(i)) - f(\mu_t(i),\mu_{-t}(i) ) > f(\mu_t(j),\mu_{-t}(j)) - f(\mu_t(i),\mu_{-t}(j)).
\]

But then
\begin{align*}
& f(\mu(i)) + f(\mu(j)) +\sum_{h\neq i,j} f(\mu(h)) \\
& < f(\mu_t(i),\mu_{-t}(j)) + f(\mu_t(j),\mu_{-t}(i))+\sum_{h\neq i,j} f(\mu(h)),
\end{align*}
so $\mu$ is not social welfare maximizing.

\textbf{(SwapEF)} Suppose that $\mu$ is not swapEF. In particular, suppose that $\mu(j)\succ_i \mu(i)$ and that, for all $t$, $(\mu_t(i),\mu_{-t}(j)) \succ_i  (\mu_t(j),\mu_{-t}(i))$.

There exists $t$ such that $\mu_t(j) \succ_i \mu_t(i)$, because otherwise $\mu_t(i) \succeq_i \mu_t(j)$ for all $t$, which would contradict $\mu(j) \succ_i \mu(i)$. Then, due to additive separability, we must have $\mu_{-t}(j) \succ_i \mu_{-t}(i)$ to avoid contradicting $(\mu_t(i), \mu_{-t}(j)) \succ_i (\mu_t(j), \mu_{-t}(i))$.

Now by submodularity we have:
\[f(\mu_t(j),\mu_{-t}(i)) - f(\mu_t(i),\mu_{-t}(i) ) > f(\mu_t(j),\mu_{-t}(j)) - f(\mu_t(i),\mu_{-t}(j)).
\]

But then
\begin{align*}
& f(\mu(i)) + f(\mu(j)) +\sum_{h\neq i,j} f(\mu(h))\\
& <  f(\mu_t(i),\mu_{-t}(j)) + f(\mu_t(j),\mu_{-t}(i)) + \sum_{h\neq i,j} f(\mu(h)),
\end{align*}
so $\mu$ is not social welfare maximizing.
\end{proof}

Finally, we find an example of a social welfare function $f$ that satisfies the submodularity. Suppose that $(o_1, \dots, o_T) \succeq (o'_1, \dots, o'_T)$ iff $\sum_{t} b_t u(o_t)\geq \sum_{t} b_t u(o'_t)$ for some $b_t > 0$ for all $t$. For example, we could have time discount $b_t=\delta^t$ for some $\delta\in (0,1)$. Let
\[ 
f(u_1,\ldots, u_T) = A \sum_{t=1}^T b_t u_t - B\sum_{t=1}^T \al_t u_t \left( \sum_{\tau\neq t} \beta_\tau u_{\tau} \right),
\] 
where the parameters $A,B$, and $\al_t $ and $\beta_t$ are strictly positive, and  $A/B$ is large enough so that $f$ is monotone increasing.

The derivative with respect to $u_t$ is
\begin{align*}
D_{u_t}f(u_1,\ldots, u_T) & = A b_t  - B [ \al_t (\sum_{\tau\neq t} \beta_\tau u_{\tau}) + \sum_{\tau\neq t} \al_\tau u_\tau \beta_t ]\\
& = A b_t - B \sum_{\tau\neq t} (\al_t \beta_\tau  +  \al_\tau  \beta_t)u_\tau.
\end{align*}
Suppose that $b_t = \al_t=\beta_t$. Then, 
\[ 
D_{u_t}f(u_1,\ldots, u_T) =  A b_t - 2 B  b_t \sum_{\tau\neq t} b_\tau  u_\tau,
\] which is monotone decreasing in the magnitude $\sum_{\tau\neq t} b_\tau  u_\tau$.

\bibliographystyle{ecta}
\bibliography{alloc}

\end{document}